% This is samplepaper.tex, a sample chapter demonstrating the
% LLNCS macro package for Springer Computer Science proceedings;
% Version 2.21 of 2022/01/12
%
\documentclass[runningheads]{llncs}
\usepackage[T1]{fontenc}
% T1 fonts will be used to generate the final print and online PDFs,
% so please use T1 fonts in your manuscript whenever possible.
% Other font encondings may result in incorrect characters.
%
\usepackage{graphicx}
% Used for displaying a sample figure. If possible, figure files should
% be included in EPS format.
%
% If you use the hyperref package, please uncomment the following two lines
% to display URLs in blue roman font according to Springer's eBook style:
%\usepackage{color}
%\renewcommand\UrlFont{\color{blue}\rmfamily}

\usepackage{amssymb}
\usepackage{amsmath}
\usepackage{todonotes}
\usepackage{comment}
\usepackage{algorithm}
\usepackage{algpseudocode}
\algnewcommand\And{\textbf{ and }}
\algnewcommand\Or{\textbf{ or }}
\algnewcommand\Elif{\newline\textbf{else if }}
\algnewcommand\Then{\textbf{ then}}

\usepackage{tikz}
\usetikzlibrary{quantikz}

\newcommand{\Endproof}{\hfill$\Box$\\}

\usepackage{subcaption}

\begin{document}
\title{Quantum Algorithm for the Multiple String Matching Problem}
%
%\titlerunning{Abbreviated paper title}
% If the paper title is too long for the running head, you can set
% an abbreviated paper title here
%
\author{Kamil Khadiev\inst{1}\orcidID{0000-0002-5151-9908}\and Danil Serov\inst{1}}
\authorrunning{K. Khadiev and D. Serov}
% First names are abbreviated in the running head.
% If there are more than two authors, 'et al.' is used.
%
\institute{
Kazan Federal University, Kazan, Russia  
\\
\email{kamilhadi@gmail.com}
}
\maketitle              % typeset the header of the contribution
\begin{abstract}
Let us consider the Multiple String Matching Problem. In this problem, we consider a long string, denoted by $t$, of length $n$. This string is referred to as a text. We also consider a sequence of $m$ strings, denoted by $S$, which we refer to as a dictionary. The total length of all strings from the dictionary is represented by the variable L. The objective is to identify all instances of strings from the dictionary within the text. The standard classical solution to this problem is Aho–Corasick Algorithm that has $O(n+L)$ query and time complexity. At the same time, the classical lower bound for the problem is the same $\Omega(n+L)$.  We propose a quantum algorithm  with $O(n+\sqrt{mL\log n}+m\log n)$ query complexity and $O(n+\sqrt{mL\log n}\log b+m\log n)=O^*(n+\sqrt{mL})$ time complexity, where $b$ is the maximal length of strings from the dictionary. This improvement is particularly significant in the case of dictionaries comprising long words. Our algorithm's complexity is equal to the quantum lower bound $O(n + \sqrt{mL})$, up to a log factor.  In some sense, our algorithm can be viewed as a quantum analogue of the Aho–Corasick algorithm. 

\keywords{Aho–Corasick Algorithm \and strings \and  quantum algorithms\and  query complexity\and search in strings\and string matching \and multiple string matching}
\end{abstract}
\section{Introduction}
\label{sec:intro}
Let us consider the Multiple String Matching Problem. In this problem, we have a long string $t$ of a length $n$ that we call a text; and a sequence of $m$ strings $S$ that we call a dictionary. The total length of all strings from the dictionary is $L$, and the maximal length of the dictionary's strings is $b$. Our goal is to find all positions of strings from the dictionary in the text. The standard classical solution to this problem is Aho–Corasick Algorithm \cite{ac75} that has $O(n+L)$ time and query complexity. 

We are interested in a quantum algorithm for this problem. There are many examples of quantum algorithms \cite{nc2010,a2017,k2022lecturenotes,aazksw2019part1} that are faster than classical counterparts  \cite{dw2001,quantumzoo}. 
Particularly, we can find such examples among problems for strings processing   \cite{aaksv2022,aj2021,kkmsy2022,ki2019,kbcw2024,ke2022,kiv2022,kk2021,kb2022,kr2021b,kr2021a,kszm2022,l2020,l2020conf,m2017,ks2024}.
%add polindrom 
One such problem is the String matching problem. That is checking whether a string $s$ is a substring of a string $t$. It is very similar to the Multiple String Matching Problem but with only one string in the dictionary. Additionally, in the String matching problem, we should find any position of $s$ not all of them. The last difference can be significant for quantum algorithms. For this problem, we can use, for example, Knuth–Morris–Pratt algorithm \cite{cormen2001,kmp77} in the classical case with $O(n+k)$ time and query complexity, where $n$ is the length of $t$ and $k$ is the length of $s$. Other possible algorithms can be Backward Oracle Matching \cite{acr1999}, Wu-Manber \cite{wm1994}.  In the quantum case, we can use Ramesh-Vinay algorithm \cite{rv2003} with $O\left(\sqrt{n}\log{\sqrt{\frac{n}{k}}}\log n + \sqrt{k}(\log k)^2\right)=$ query complexity. We can say that query and time complexity of the Ramesh-Vinay algorithm is $O^*(\sqrt{n}+\sqrt{k})$, where $O^*$ notation hides log factors. If we naively apply these algorithms to the Multiple String Matching Problem, then we obtain $O(mn+L)$ complexity in the classical case, and $O\left(mn\log{\sqrt{\frac{n}{b}}}\log n + m^{1.5}\sqrt{L}(\log b)^2\right)=O^*(mn+m^{1.5}\sqrt{L})$ in the quantum case.

In the classical case, Aho–Corasick Algorithm \cite{ac75} has better complexity that is $O(n+L)$. Another possible algorithm that can work better in average is  the Commentz-Walter algorithm \cite{cw1979}. In the quantum case, we suggest a new algorithm for the Multiple String Matching Problem with $O(n+\sqrt{mL\log n}+m\log n)$   query complexity and $O(n+\sqrt{mL\log n}\log b+m\log n)$ time complexity. So, we can say that both are $O^*(n+\sqrt{mL})$. At the same time, we show that the lower bound for classical complexity is $\Omega(n+l)$, and for quantum complexity, it is $\Omega(n+\sqrt{mL})$. It means, that the Aho–Corasick Algorithm reaches the lower bound, and our algorithm reaches the lower bound up to a log factor. At the same time, if $O(m)$ strings of the dictionary have at least $\omega(\log n)$ length, then our quantum algorithm shows speed-up.
 
The structure of this paper is as follows:
Section \ref{sec:prelims} contains preliminaries, used data structures and algorithms.
Section \ref{sec:aho–corasick} provides a quantum algorithm for the Multiple String Matching Problem. The final Section \ref{sec:concl} concludes the paper and contains open questions. 

\section{Preliminaries and Tools}
\label{sec:prelims}
For a string $u=(u_1,\dots,u_M)$, let $|u|=M$ be a length of the string. Let $u[i:j]=(u_i,\dots,u_j)$ be a substring of $u$ from $i$-th to $j$-th symbol. Let $u[i]=u_i$ be $i$-th symbol of $u$. In the paper, we say that a string $u$ is less than a string $v$ if $u$ precedes $v$ in the lexicographical order.

\subsection{The Multiple String Matching Problem}
Suppose, we have a string $t$ and a sequence of strings $S=(s^1,\dots,s^m)$. We call the set $S$ a dictionary, and the the string $t$ a text. Let $L=|s^1|+\dots+|s^m|$ be the total length of the dictionary, and $b=\max\{|s^1|,\dots,|s^m|\}$ be the maximal length of dictionary's strings. Let an integer $n=|t|$ be the length of $t$. The problem is to find positions in $t$ for all strings from  $S$. Formally, we want to find a sequence of indexes ${\cal I}=(I_1,\dots,I_m)$, where $I_j=(i_{j,1},\dots,i_{j,k_j})$ such that $t[i_{j,x}:i_{j,x}+|s^j|-1]=s^j$ for each $x\in\{1,\dots,k_j\}, j\in\{1,\dots,m\}$. Here $k_j$ is the number of occurrences of $s^j$ in the string $t$.
%\subsection{The Problem for LZ78}

%In the paper, we use a {\em trie} (prefix tree) data structure \cite{d59,b98,b2008,knuth73}. It is a tree that allows us to add a string $s$  and check whether $s$ is in the tree with running time $O(|s|)$. The data structure implements a \``set of strings'' data structure. Let us have the following operations with a trie $T$:
%\begin{itemize}
%    \item $\textsc{InitTrie()}$ returns an empty trie. The running time of the operation is $O(1)$.
%    \item $\textsc{AddToTrie(T,s)}$ adds a string $s$ to the trie $T$. The running time of the operation is $O(|s|)$.
%    \item $\textsc{Contains(T,s)}$ returns $1$ if a string $s$ belongs to the trie $T$, and $False$ otherwise. The running time of the operation is $O(|s|)$.
%\end{itemize}

\subsection{Suffix Array}
A suffix array \cite{mm90} is an array $suf=(suf_1,\dots,suf_{l})$ for a string $u$ where $l=|u|$ is the length of the string. The suffix array is the lexicographical order for all suffixes of $u$. Formally, $u[suf_i:l]<u[suf_{i+1}:l]$ for any $i\in\{1,\dots,l-1\}$.
Let $\textsc{ConstructSuffixArray}(u)$ be a procedure that constructs the suffix array for the string $u$. The query and time complexity of the procedure are as follows:
\begin{lemma}[\cite{llh2018}]\label{lm:suf-arr}
A suffix array for a string $u$ can be constructed with $O(|u|)$ query and time complexity.
\end{lemma}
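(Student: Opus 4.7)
The plan is to invoke one of the standard linear-time suffix array constructions; since the result is classical, my proposal is to sketch the recursive divide-and-conquer approach (in the spirit of the DC3 / skew algorithm of K\"arkk\"ainen--Sanders, which is the simplest to analyze and is essentially the template used by \cite{llh2018}). The strategy partitions the suffixes by the residue of their starting position modulo a small constant, recursively sorts one class by reducing to a smaller instance of the same problem on a compressed alphabet, then uses that recursive answer to cheaply sort the remaining class, and finally merges the two sorted lists in linear time.

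More concretely, I would proceed as follows. First, form triples $(u[i], u[i+1], u[i+2])$ for positions $i$ with $i \bmod 3 \in \{1,2\}$, radix-sort these triples in $O(|u|)$ time (the alphabet can be assumed to be polynomial in $|u|$ after an initial renaming pass), and assign each triple a rank. Concatenating the ranks of the positions $i \equiv 1 \pmod 3$ with those of $i \equiv 2 \pmod 3$ yields a new string $u'$ of length $\tfrac{2}{3}|u| + O(1)$; sorting the suffixes of $u'$ recursively gives the lexicographic order of all suffixes of $u$ whose starting position is not $\equiv 0 \pmod 3$. Second, use this partial order to sort the remaining ``mod-$0$'' suffixes: each such suffix $u[i{:}]$ is characterized by the pair $(u[i], \text{rank of } u[i+1{:}])$, and a pair sort in $O(|u|)$ suffices. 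Third, merge the two sorted sublists; the key observation is that comparing a mod-$0$ suffix with a mod-$1$ or mod-$2$ suffix can always be decided in $O(1)$ by looking at one or two leading characters and then invoking a precomputed rank, so the merge also runs in $O(|u|)$ time.

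Adding up the work gives the recurrence $T(|u|) = T(\tfrac{2}{3}|u|) + O(|u|)$, which solves to $T(|u|) = O(|u|)$ for both queries (we only ever read each symbol of $u$ a constant number of times per level, and radix-sort performs only $O(|u|)$ character comparisons in total) and time.

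The main obstacle in a fully written proof would be the correctness and complexity of the merge step: one must argue carefully that the ranks computed in the recursive call give enough information to compare any mod-$0$ suffix against any mod-$1$/mod-$2$ suffix in $O(1)$, and that radix sort on an alphabet of size $O(|u|)$ indeed runs in $O(|u|)$ with a single query per position. Both points are standard and handled in \cite{llh2018}, so I would simply cite that reference for the detailed bookkeeping rather than reprove it here.
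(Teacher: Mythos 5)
Your sketch is correct as a proof of the stated bound, but note that the paper itself offers no proof of this lemma at all: it is stated purely by citation to \cite{llh2018}, so any standard linear-time construction suffices. One point of attribution, though: \cite{llh2018} (Li--Li--Huo, \emph{Optimal In-Place Suffix Sorting}) is \emph{not} built on the DC3/skew template you describe. It belongs to the induced-sorting lineage (SA-IS/SACA-K): it classifies suffixes as L- or S-type, sorts the LMS suffixes via a recursion on a subproblem of size at most $|u|/2$, and then induces the order of all remaining suffixes in two linear scans --- and its distinguishing contribution is doing this with $O(1)$ workspace beyond the input and output, which DC3 does not achieve. Both approaches satisfy the recurrence $T(|u|) \leq T(c\,|u|) + O(|u|)$ with $c < 1$ ($c = 2/3$ for DC3, $c = 1/2$ for induced sorting) and hence both give $O(|u|)$ time and queries, so your argument establishes the lemma; it just is not ``essentially the template used by \cite{llh2018}.'' One caveat worth making explicit in either proof: linear time genuinely requires an integer (radix-sortable) alphabet of size polynomial in $|u|$ --- your ``initial renaming pass'' is itself only $O(|u|)$ under that assumption, since renaming a general ordered alphabet accessed by comparisons costs $\Omega(|u|\log|u|)$. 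This is the standard assumption in the suffix-array literature and is implicitly made by the paper, but a fully written proof should state it.
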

Let longest common prefix (LCP) of two strings $u$ and $s$ or $LCP(u,s)$ be an index $i$ such that $u_j=s_j$ for each $j\in\{1,\dots,i\}$, and $u_{i+1}\neq s_{i+1}$.
 
For a suffix array $suf$ we can define an array of longest common prefixes (LCP array). Let call it $lcp=(lcp_{1},\dots, lcp_{l-1})$, where $lcp_i=LCP(u[suf_i:l],u[suf_{i+1}:l])$ is the longest common prefix of two suffixes $u[suf_i:l]$ and $u[suf_{i+1}:l]$. The array can be constructed with linear complexity:
%Let $\textsc{ConstructLCPforSuffixArray}(u)$ be a procedure that constructs the LCP array for the string $u$. The query and time complexity of the procedure are as follows:
\begin{lemma}[\cite{lklaap2001,klaap2001}]\label{lm:lcp-arr}
An array of longest common prefixes (LCP array) for a string $u$ can be constructed with $O(|u|)$ query and time complexity.
\end{lemma}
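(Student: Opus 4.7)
The plan is to follow the approach of Kasai et al.\ (the cited work \cite{klaap2001}): compute the LCP array in text order rather than in suffix-array order, and exploit a simple monotonicity property to amortize the character comparisons. Throughout, I assume the suffix array $suf$ has already been built in $O(|u|)$ by Lemma~\ref{lm:suf-arr}, and write $l=|u|$.

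First I would build the inverse permutation $rank=(rank_1,\dots,rank_l)$ defined by $rank_{suf_i}=i$; this is a single $O(l)$ pass. The array $rank$ tells me, for each starting position $p$ in the text, the index $rank_p$ of the suffix $u[p{:}l]$ in the sorted order, and hence its immediate predecessor in that order is the suffix starting at $suf_{rank_p-1}$. I would then process positions $p=1,2,\dots,l$ in text order, maintaining a running counter $h$ that equals the length of the LCP of the suffix starting at the previously processed position with its sorted-order predecessor. For each $p$, if $rank_p>1$, I set $q=suf_{rank_p-1}$ and extend $h$ by naive character-by-character comparison of $u[p+h],u[q+h],u[p+h+1],u[q+h+1],\dots$ until a mismatch, then assign $lcp_{rank_p-1}=h$ and decrement $h$ by one before moving on to position $p+1$ (if $rank_p=1$, set $h=0$ and skip).

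The key lemma driving correctness and the linear bound is that if $u[p{:}l]$ and $u[q{:}l]$ share a prefix of length $h\ge 1$, then $u[p{+}1{:}l]$ and $u[q{+}1{:}l]$ share a prefix of length exactly $h-1$, and moreover the sorted-order predecessor of $u[p{+}1{:}l]$ must share with it a prefix of length at least $h-1$ (otherwise $u[q{+}1{:}l]$ would be lexicographically squeezed between them, contradicting the fact that in the sorted order strings sharing a longer common prefix must lie closer together in the relevant range). This is exactly the justification for the ``$h{-}1$ warm start'' trick and is where one has to be careful; it is the only non-routine step.

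For complexity, observe that $h$ is incremented by one per successful character comparison and decremented by at most one per outer iteration, and it stays in $[0,l]$. Hence the total number of increments, and therefore the total number of character comparisons, is at most $2l$, giving $O(l)$ overall. Combined with the $O(l)$ construction of $rank$ and of $suf$ itself, the total query and time complexity is $O(|u|)$, as claimed.
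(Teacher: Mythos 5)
Your proposal is correct and is precisely the argument of the cited source: the paper itself offers no proof for this lemma, deferring entirely to Kasai et al.~\cite{klaap2001}, whose rank-array construction, text-order traversal with the $h-1$ warm start, and the $2l$-comparison amortization you have reproduced faithfully (including the key monotonicity lemma, which you state essentially correctly --- reading ``share a prefix of length $h$'' as the LCP being exactly $h$ makes your ``exactly $h-1$'' claim precise). No gap; this matches the intended justification.
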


Using the lcp array, we can compute the LCP of any two suffixes $LCP(u[suf_i:l],u[suf_{j}:l])$ in constant time due to the algorithm from \cite{bfc2000}.
\begin{lemma}[\cite{bfc2000}]\label{lm:lcp2-arr}
For $i,j\in\{1,\dots |u|\}$, there is an algorithm that computes the longest common prefix for suffixes $u[suf_i:l]$ and $u[suf_{j}:l]$ of the string $u$ with $O(1)$ query and time complexity with preprocessing with $O(|u|)$ query and time complexity. 
\end{lemma}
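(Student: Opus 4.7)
The plan is to reduce the two–suffix LCP problem to a Range Minimum Query (RMQ) problem on the already constructed LCP array, and then invoke the linear–preprocessing, constant–query RMQ algorithm of Bender and Farach–Colton, which is what the cited reference provides.

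The first step is the classical observation that for any two positions $i<j$ in the suffix array $suf$, the longest common prefix of the suffixes $u[suf_i:l]$ and $u[suf_j:l]$ equals the minimum of the LCP values strictly between them, namely
\[
LCP\bigl(u[suf_i:l],\,u[suf_j:l]\bigr)=\min\{lcp_i,\,lcp_{i+1},\,\dots,\,lcp_{j-1}\}.
\]
I would justify this by induction on $j-i$: consecutive suffixes have LCP given directly by the $lcp$ array (by definition), and extending from a pair $(suf_i,suf_{j-1})$ to $(suf_i,suf_j)$ can only shrink the common prefix to $\min(\text{previous LCP},lcp_{j-1})$, since any character that distinguishes $u[suf_{j-1}:l]$ from $u[suf_j:l]$ at position $lcp_{j-1}+1$ also distinguishes the whole block from $u[suf_j:l]$ by the lexicographic ordering of $suf$. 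So the problem becomes: given the array $lcp$ of length $l-1$, answer RMQ queries on it.

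The second step is to apply an RMQ data structure with $O(l)$ preprocessing and $O(1)$ query. The standard route, due to Bender and Farach–Colton, is to build the Cartesian tree of $lcp$ (which can be done in linear time with a stack), reduce general RMQ to $\pm 1$ RMQ on the Euler tour of that tree, and then solve $\pm 1$ RMQ in $O(n)$/$O(1)$ by splitting the tour into blocks of size $\tfrac12\log n$, precomputing sparse-table answers on block minima, and precomputing a lookup table for all $O(\sqrt{n})$ possible $\pm 1$ block patterns. By Lemma~\ref{lm:suf-arr} and Lemma~\ref{lm:lcp-arr} the arrays $suf$ and $lcp$ are built in $O(|u|)$, so the total preprocessing remains $O(|u|)$, and each LCP query becomes a single RMQ on $lcp$, taking $O(1)$.

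The case $i=j$ is trivial (return $l-suf_i+1$), and for $i>j$ one swaps indices. The main conceptual obstacle is the block decomposition of the $\pm 1$ RMQ that makes the table size $O(\sqrt{n}\,\mathrm{poly}\log n)=O(n)$ rather than $O(n\log n)$; everything else (the reduction, Cartesian tree construction, and the LCP-to-RMQ identity) is routine. Since this is exactly the content of \cite{bfc2000}, I would simply cite it for the RMQ part and focus the proof on the identity and the reduction.
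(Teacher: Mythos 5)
Your proposal is correct and matches the paper's treatment: the paper gives no proof of this lemma, citing \cite{bfc2000} directly, and your argument — the identity $LCP(u[suf_i:l],u[suf_j:l])=\min\{lcp_i,\dots,lcp_{j-1}\}$ reducing the problem to range-minimum queries on the $lcp$ array, solved in $O(|u|)$ preprocessing and $O(1)$ query via the Cartesian-tree/$\pm 1$-RMQ machinery — is precisely the content of that reference. Your handling of the degenerate cases ($i=j$, $i>j$) and the linear-size table accounting for the block decomposition are also correct, so nothing is missing.
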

For $i,j\in\{1,\dots |u|\}$, let $\textsc{LCPSuf}(u,i,j)$ be the procedure that returns $LCP(u[suf_i:l],u[suf_{j}:l])$, and $\textsc{PreprocessingForLCP}(u)$ be the preprocessing procedure for this algorithm and constructing the $lcp$ array.

\subsection{Quantum Query Model}
One of the most popular computation models for quantum algorithms is the query model.
We use the standard form of the quantum query model. %is a generalization of the decision tree model of classical computation that is commonly used to lower bound the amount of time required by a computation. 
Let $f:D\rightarrow \{0,1\},D\subseteq \{0,1\}^M$ be an $M$ variable function. Our goal is to compute it on an input $x\in D$. We are given oracle access to the input $x$, i.e. it is implemented by a specific unitary transformation usually defined as $\ket{i}\ket{z}\ket{w}\mapsto \ket{i}\ket{z+x_i\pmod{2}}\ket{w}$, where the $\ket{i}$ register indicates the index of the variable we are querying, $\ket{z}$ is the output register, and $\ket{w}$ is some auxiliary work-space. An algorithm in the query model consists of alternating applications of arbitrary unitaries which are independent of the input and the query unitary, and a measurement at the end. The smallest number of queries for an algorithm that outputs $f(x)$ with probability $\geq \frac{2}{3}$ on all $x$ is called the quantum query complexity of the function $f$ and is denoted by $Q(f)$.
We refer the readers to \cite{nc2010,a2017,aazksw2019part1,k2022lecturenotes} for more details on quantum computing. 

In this paper, we are interested in the query complexity of the quantum algorithms. We use modifications of Grover's search algorithm \cite{g96,bbht98} as quantum subroutines. For these subroutines, time complexity can be obtained from query complexity by multiplication to $O(\log D)$, where $D$ is the size of a search space \cite{ad2017,g2002}.  

%\subsection{Noisy Self-Balanced Binary Search Tree for Strings with Quantum Comparator}
%The Noisy Self-Balanced Binary Search Tree for Strings with Quantum Comparator is a data structure that was introduced and developed in \cite{kszm2023,ke2022}. It is a Self-Balanced Binary Search Tree (Red-Black tree \cite{cormen2001} as an example) that stores strings (not strings themself but their indexes), and compares them using a quantum algorithm for comparing strings \cite{bjdb2017,ki2019,kiv2022}
%todo update citation \cite{bjdb2017,ki2019,a2019,kiv2022,kkmsy2022,kl2020,l2020} 
% and uses the Walking Tree technique \cite{kszm2022} for removing additional complexity because of comparing the procedure's error probability. The quantum algorithm for comparing strings is based on a modification of Grover's search algorithm \cite{g96,bbht98} that finds the minimum argument satisfied a predicate \cite{dhhm2004,k2014,ll2016,kkmsy2022}.
%todo update citation \cite{dhhm2004,k2014,ll2015,ll2016,kkmsy2022}

\section{Quantum Algorithm for the Multiple String Matching Problem}
\label{sec:aho–corasick}
Let us present a quantum algorithm for the Multiple String Matching Problem. A typical solution for the problem is the Aho–Corasick Algorithm that works with $O(n+L)$ query and time complexity. Here we present an algorithm that works with $O(n+\sqrt{mL\log n}+m\log n)$ query complexity. The algorithm uses ideas from the paper of Manber and Myers \cite{mm90} and uses quantum strings comparing algorithm \cite{bjdb2017,ki2019,kiv2022}.
%todo update citation \cite{bjdb2017,ki2019,a2019,kiv2022,kkmsy2022,kl2020,l2020} 
%that is based on modification of Grover's search algorithm \cite{g96,bbht98} that finds the minimum argument satisfied a predicate \cite{dhhm2006,k2014,ll2016,kkmsy2022}.
%todo update citation \cite{dhhm2006,k2014,ll2015,ll2016,kkmsy2022}

Firstly, let us construct a suffix array, LCP array, and invoke preprocessing for $\textsc{LCPSuf}$ function for the string $t$ (see Section \ref{sec:prelims} for details). Note that $n=|t|$.

Let us consider a string $s^j$ for $j\in\{1,\dots, m\}$. If $s^j$ is a substring of $t$ starting from an index $i$, then $s^j$ is a prefix of the suffix $t[suf_i:n]$, i.e. $t[suf_i:suf_i+|s^j|-1]=s^j$. Since $t[suf_i:n]$  strings are in a lexicographical order according to $suf$ order, all suffixes that have $s^j$ as prefixes are situated sequentially. Our goal is to find two indexes $left_j$ and $right_j$ such that all suffixes $t[suf_i:n]$ has $s^j$ as a prefix for each $i\in\{suf_{left_j},\dots,suf_{right_j}\}$. So, we can say that $I_j=(suf_{left_j},\dots,suf_{right_j})$.

There is a quantum algorithm for computing LCP of $u$ and $v$. Let $\textsc{QLCP}(u,v)$ be the corresponding function. Its complexity is presented in the next lemma. 
\begin{lemma}[\cite{kkmsy2022}]\label{lm:qlcp}
There is a quantum algorithm that implements $\textsc{QLCP}(u,v)$  procedure, and has query complexity $O(\sqrt{d})$ and time complexity $O(\sqrt{d}\log d)$, where $d$ is the minimal index of an unequal symbol of $u$ and $v$. The error probability is at most $0.1$. 
\end{lemma}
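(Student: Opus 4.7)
The plan is to reduce the computation of $\textsc{QLCP}(u,v)$ to the \emph{find-first-marked-element} problem for the predicate $f(i)=[u_i\neq v_i]$: once the minimal $i$ with $f(i)=1$ is known, the LCP is simply $i-1$. The goal, then, is a quantum algorithm that locates this minimal index $d$ in $O(\sqrt{d})$ queries, using only the oracles for $u$ and $v$ (which let us evaluate $f(i)$ with two queries each).

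I would use a standard two-phase construction. In the \emph{detection phase}, I run BBHT's existence-search \cite{bbht98} on the prefix $[1,2^k]$ for $k=0,1,2,\dots$, doubling the window until a mismatch is detected. Each BBHT call on a window of length $2^k$ costs $O(\sqrt{2^k})$ queries, and the process terminates at $k=\lceil\log_2 d\rceil$, so by the geometric series the phase costs
\[
\sum_{k=0}^{\lceil\log_2 d\rceil}O(\sqrt{2^k})=O(\sqrt{d})
\]
queries. In the \emph{localization phase}, I run the Dürr–Høyer quantum minimum-finding algorithm on the marked indices of the discovered window $[1,2^k]$ of length at most $2d$; this returns the smallest $i$ with $f(i)=1$ in $O(\sqrt{2^k})=O(\sqrt{d})$ queries. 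Adding the two phases gives the claimed query bound, and the time bound follows by multiplying by the per-query arithmetic cost $O(\log d)$ for index manipulation, as noted in \cite{ad2017,g2002}.

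The main obstacle is controlling the failure probability. Both BBHT and Dürr–Høyer have only constant one-sided error per call, and the detection phase makes $\Theta(\log d)$ calls, so a naive union bound would give $\Theta(\log d)$ total error. To drive the overall error to at most $0.1$, I would amplify each subroutine by independent repetition so that it fails with probability at most $O(1/\log d)$; this adds only an $O(\log\log d)$ multiplicative factor, which can be absorbed into the $\log d$ factor already present in the time complexity and does not affect the query complexity up to constants, since the constants in the BBHT and Dürr–Høyer guarantees can be chosen as described in~\cite{kkmsy2022}. Consistency with the stated definition of $\textsc{LCP}$ in the preliminaries is a final routine check: the returned value is $d-1$ if $d\leq\min(|u|,|v|)$, and $\min(|u|,|v|)$ otherwise, which is exactly $\textsc{QLCP}(u,v)$.
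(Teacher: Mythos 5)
Your construction is essentially the one the paper relies on: the lemma is imported from \cite{kkmsy2022}, and the surrounding text describes $\textsc{QLCP}$ exactly as an application of the First-One Search algorithm \cite{dhhm2006,k2014,ll2016,kkmsy2022}, i.e., Grover/BBHT detection on exponentially doubling prefixes followed by locating the minimal marked index, with the $\log$-factor time overhead coming from \cite{ad2017,g2002}. So the skeleton of your argument matches. However, your error analysis has a genuine flaw. Amplifying every one of the $\Theta(\log d)$ subroutine calls to error $O(1/\log d)$ requires $\Theta(\log\log d)$ independent repetitions of \emph{each} call, and repetitions cost queries, not just time: the query count becomes $O(\sqrt{d}\log\log d)$ rather than the claimed $O(\sqrt{d})$. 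Your remark that the constants ``can be chosen as described in \cite{kkmsy2022}'' does not repair this --- a $\log\log d$ repetition factor is not a constant, and appealing to the very lemma being proved is circular.

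The standard fix keeps the geometric structure of the costs by allocating error unevenly across levels: repeat the level-$k$ detection call $\Theta(K-k+1)$ times, where $K=\lceil\log_2 d\rceil$, so that its failure probability is $2^{-\Omega(K-k)}$. The total error is then a convergent geometric series that can be kept below $0.1$, while the total query cost is $\sum_{k=0}^{K} O\left(\sqrt{2^k}\,(K-k+1)\right)=O\left(\sqrt{2^K}\right)=O(\sqrt{d})$, since $\sum_{j\geq 0}(j+1)2^{-j/2}$ converges. Alternatively, exploit the fact that BBHT's error is one-sided (a reported mismatch is verified by one extra query, so the only failure mode is overshooting to a larger window): with constant per-level success probability the expected cost is $\sum_{j\geq 0} 2^{-\Omega(j)}\sqrt{d\cdot 2^{j}}=O(\sqrt{d})$, and truncating via Markov's inequality yields a worst-case $O(\sqrt{d})$ bound with constant error. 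This composition issue is precisely what the paper handles by citing \cite{k2014} (``such sequence can be converted to an algorithm with the same total complexity and error probability $0.1$''); your proof needs one of these devices in place of the uniform $O(1/\log d)$ amplification.
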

The algorithm is an application of the First-One Search algorithm \cite{dhhm2006,k2014,ll2016,kkmsy2022} that finds a minimal argument of a Boolean-valued function (predicate) that has the $1$-result. The First-One Search algorithm algorithm has an error probability $0.1$ that is why we have the same error probability for $\textsc{QLCP}$ procedure. The First-One Search is an algorithm that is based on the Grover Search algorithm \cite{g96,bbht98}. So, the difference between query and time complexity is a log factor due to \cite{ad2017,g2002}. The $\textsc{QLCP}(u,v)$ procedure is similar to the quantum algorithm that compares two strings $u$ and $v$ in lexicographical order that was independently developed in \cite{bjdb2017} and in \cite{ki2019,kiv2022}.

For $i\in\{1,\dots,min(|u|,|v|)\}$, let $\textsc{QLCP}(u,v,i)$ be a similar function that computes LCP starting from the $i$-th symbol. Formally, $\textsc{QLCP}(u,v,i)=\textsc{QLCP}(u[i:|u|],v[i:|v|])$. Due to Lemma \ref{lm:qlcp}, complexity of $\textsc{QLCP}(u,v,i)$ is following
\begin{corollary}\label{cr:lcp}
There is a quantum algorithm that implements $\textsc{QLCP}(u,v,i)$ procedure, and has $O(\sqrt{d-i})$ query complexity and $O(\sqrt{d-i}\log (d-i))$ time complexity, where $d$ is the minimal index of an unequal symbol of $u$ and $v$ such that $d>i$. The error probability is at most $0.1$.  
\end{corollary}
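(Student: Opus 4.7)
The plan is to reduce the corollary directly to Lemma~\ref{lm:qlcp} via the defining identity $\textsc{QLCP}(u,v,i)=\textsc{QLCP}(u[i:|u|],v[i:|v|])$, treating the two suffixes as fresh input strings accessed through the original oracles with an index offset of $i-1$.

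First I would set $u'=u[i:|u|]$ and $v'=v[i:|v|]$, and note that each symbol query to $u'$ or $v'$ can be simulated by one query to $u$ or $v$ (add $i-1$ to the query index inside the oracle), so the query complexity is preserved up to a constant factor of $1$. Then I would invoke Lemma~\ref{lm:qlcp} on the pair $(u',v')$. Let $d'$ be the minimal index at which $u'$ and $v'$ differ; by the definition of $d$ in the corollary statement (the minimal index $>i$ of disagreement between $u$ and $v$), we have the relation $d'=d-i+1$, so the bound $O(\sqrt{d'})$ from the lemma becomes $O(\sqrt{d-i+1})=O(\sqrt{d-i})$, and similarly the time bound becomes $O(\sqrt{d-i}\log(d-i))$.

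Finally I would remark that the error probability is inherited unchanged: we make a single call to the underlying $\textsc{QLCP}$ routine, so the $0.1$ bound from Lemma~\ref{lm:qlcp} carries over. The only step requiring a moment of care is the index bookkeeping between $d$ (defined on the original strings) and $d'$ (defined on the shifted substrings); there is no algorithmic obstacle, only a one-off shift. Thus the corollary is essentially a direct rescaling of Lemma~\ref{lm:qlcp}, and no genuinely hard step is involved.
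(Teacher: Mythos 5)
Your reduction is correct and is exactly how the paper derives the corollary: it is stated as an immediate consequence of Lemma~\ref{lm:qlcp} via the definition $\textsc{QLCP}(u,v,i)=\textsc{QLCP}(u[i:|u|],v[i:|v|])$, with the same index shift $d'=d-i+1$ (and $d-i+1=O(d-i)$ since $d>i$). Your only addition is making the oracle-offset simulation and the error-probability inheritance explicit, which the paper leaves implicit.
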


Let us present the algorithm that finds $left_j$. Let us call it $\textsc{LeftBorderSearch(j)}$. The algorithm is based on the Binary search algorithm \cite{cormen2001}.
 Let $Le$ be the left border of the search segment, and $Ri$ be the right border of the search segment. Let $St_i=t[suf_i:n]$ be $i$-th suffix for $i\in\{1,\dots,n\}$.
\begin{itemize}
\item[] \textbf{Step 1.} We assign $Le\gets 1$ and $Ri\gets n$. Let $Llcp\gets\textsc{QLCP}(St_{Le},s^j)$ be the LCP of the first suffix and the searching string. Let $Rlcp\gets\textsc{QLCP}(St_{Ri},s^j)$ be the LCP of the last suffix and the searching string.
\item[] \textbf{Step 2.} If $Llcp<|s^j|$ and $s^j<St_1$, i.e. $Llcp<|s^j|$ and $s^j[Llcp+1]<St_1[Llcp+1]$, then we can say that $s^j$ is less than any suffix of $t$ and is not a prefix of any suffix of $t$. So, we can say that $I_j$ is empty. In that case, we stop the algorithm, otherwise, we continue with Step 3.
\item[] \textbf{Step 3.} If $Rlcp<|s^j|$ and $s^j>St_n$, i.e. $Rlcp<|s^j|$ and $s^j[Rlcp+1]>St_n[Rlcp+1]$, then we can say that $s^j$ greats any suffix of $t$ and is not a prefix of any suffix of $t$. So, we can say that $I_j$ is empty also. In that case, we stop the algorithm, otherwise, we continue with Step 4.
\item[] \textbf{Step 4.} We repeat the next steps while $Ri-Le>1$, otherwise we go to Step 9.
\item[] \textbf{Step 5.} Let $M\gets\lfloor (Le+Ri)/2\rfloor$.
\item[] \textbf{Step 6.} If $Llcp\geq Rlcp$, then we go to Step 7, and to Step 8 otherwise.
\item[] \textbf{Step 7.} Here we compare $LCP(St_L,St_M)$ and $Llcp$. Note that $LCP(St_{Le},St_M)$ can be computed as $\textsc{LCPSuf}(t,Le,M)$ in $O(1)$ query and time complexity. We have one of three options:
\begin{itemize}
\item If $LCP(St_{Le},St_M)>Llcp$, then all suffixes from $St_{Le}$ to $St_M$ are such that $St_M[Llcp+1]=\dots =St_{Le}[Llcp+1]\neq s^j[Llcp+1]$, and they cannot have $s^j$ as a prefix. So, we can assign $Le\gets M$ and not change $Llcp$.  
\item If $LCP(St_{Le},St_M)=Llcp$, then all suffixes from $St_{Le}$ to $St_M$ has at least $Llcp$ common symbols with $s^j$. Let us compute $Mlcp=LCP(St_M, s^j)$ using $\textsc{QLCP}(St_M, s^j, Llcp+1)$. If $Mlcp=|s^j|$, then we can move the right border to $M$ and update $Rlcp$ because we search the leftmost occurrence of $s^j$. So, $Ri\gets M$ and $Rlcp\gets Mlcp$. A similar update of $R$ and $Rlcp$ we do if $St_M[Mlcp+1]>s^j[Mlcp+1]$. If $St_M[Mlcp+1]<s^j[Mlcp+1]$, then $Le\gets M$ and $Llcp\gets Mlcp$.
\item  If $LCP(St_{Le},St_M)<Llcp$, then all suffixes from $St_M$ to $St_R$ cannot have $s^j$ as a prefix. Therefore, we update $Ri\gets M$, and $Rlcp\gets  LCP(St_{Le},St_M)$.
\end{itemize}
After that, we go back to Step 4.
\item[] \textbf{Step 8.}  The step is similar to Step 7,  but we compare $LCP(St_M,St_{Ri})$ with $Rlcp$. We have one of three options:
\begin{itemize}
\item If $LCP(St_M,St_{Ri})>Rlcp$, then $Ri\gets M$ and we do not change $Rlcp$.  
\item If $LCP(St_M,St_{Ri})=Rlcp$, then we compute $Mlcp=LCP(St_M, s^j)$ using $\textsc{QLCP}(St_M, s^j, Rlcp+1)$. Then, we update variables exactly by the same rule as in the second case of Step 7.
\item If $LCP(St_M,St_{Ri})<Rlcp$, then we update $L\gets M$, and $Llcp\gets LCP(St_M,St_{Ri})$.
\end{itemize}
After that, we go back to Step 4.
\item[] \textbf{Step 9.} The result of the search is $Ri$. So, we assign $left_j\gets Ri$.
\end{itemize}
Similarly, we can compute $right_j$ in a procedure $\textsc{RightBorderSearch}(j)$.

The implementation for $\textsc{LeftBorderSearch}(j)$ is presented in Algorithm~\ref{alg:leftSearch}, and the complexity is presented in Lemma \ref{lm:leftSearch}. %(See implementation for $\textsc{LeftBorderSearch(j)} $ that searches $right_j$ is presented in Appendix \ref{apx:rightSearch})

\begin{algorithm}[H]
    \caption{Quantum algorithm for $\textsc{LeftBorderSearch}(j)$ that searches $left_j$.}\label{alg:leftSearch}
    \begin{algorithmic}
        \State $Llcp \gets \textsc{QLCP}(St_{1}, s^j)$, $Rlcp \gets \textsc{QLCP}(St_{n}, s^j)$
        \If{$Llcp = |s^j|$}
 			\State $answer \gets 0$
 		\Elif{$s^j[Llcp+1]<St_1[Llcp+1]$\Then}
        \State $answer \gets -1$
        \Elif{$s^j[Rlcp+1]>St_n[Rlcp+1]$\Then}
        \State $answer \gets -1$
        \Else
     	\State $Le\gets 1$, $Ri\gets n$
        \While{$Ri - Le > 1$}
        \State $M \gets \lfloor (Le + Ri) / 2\rfloor$
        \If{$Llcp\geq Rlcp$}
        		\If{$\textsc{LCPSuf}(Le, M)> Llcp$}
        			\State $Le\gets M$
        		\EndIf
        		\If{$\textsc{LCPSuf}(Le, M)= Llcp$}
        			\State $Mlcp=\textsc{QLCP}(St_M, s^j, Llcp+1)$
        			\If{$Mlcp=|s^j| \Or St_M[Mlcp+1]>s^j[Mlcp+1]$}
        				\State $Ri\gets M$, $Rlcp\gets Mlcp$
        			\Else
        				\State $Le\gets M$, $Llcp\gets Mlcp$
        			\EndIf
			\EndIf        		
        		\If{$\textsc{LCPSuf}(Le, M)< Llcp$}
        			\State $Ri \gets M$, $Rlcp \gets \textsc{LCPSuf}(Le, M)$
			\EndIf	
        		
        \Else
			\If{$\textsc{LCPSuf}(M, Ri)> Rlcp$}
        			\State $Ri\gets M$
        		\EndIf
        		\If{$\textsc{LCPSuf}(M, Ri)= Rlcp$}
        			\State $Mlcp=\textsc{QLCP}(St_M, s^j, Rlcp+1)$
        			\If{$Mlcp=|s^j| \Or St_M[Mlcp+1]>s^j[Mlcp+1]$}
        				\State $Ri\gets M$, $Rlcp\gets Mlcp$
        			\Else
        				\State $Le\gets M$, $Llcp\gets Mlcp$
        			\EndIf
			\EndIf        		
        		\If{$\textsc{LCPSuf}(M, Ri)< Rlcp$}
        			\State $Le \gets M$, $Llcp \gets \textsc{LCPSuf}(M, Ri)$
			\EndIf        		        
        \EndIf
        \EndWhile
       	 \State $answer \gets Ri$
        \EndIf
        \State \Return $answer$
    \end{algorithmic}
\end{algorithm}
\begin{lemma}\label{lm:leftSearch}
Algorithm \ref{alg:leftSearch} implements $\textsc{LeftBorderSearch}(j)$, searches $left_j$ and works with $O(\sqrt{|s^j|\log n}+\log n)$ query complexity, $O(\sqrt{|s^j|\log n}\log|s^j|+\log n)$ time complexity, and at most $0.1$ error probability. 
\end{lemma}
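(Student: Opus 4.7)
The plan is to establish three claims in sequence: correctness of the returned value, the stated query and time complexities, and the $0.1$ error bound.

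For correctness, I would maintain the loop invariant $St_{Le} \le s^j \le St_{Ri}$ lexicographically, together with $Llcp = LCP(St_{Le}, s^j)$ and $Rlcp = LCP(St_{Ri}, s^j)$. The initial branches at Steps 2--3 handle the cases where $s^j$ lies outside the sorted range of suffixes. Inside the loop, the three-way case split of Step 7 (when $Llcp \ge Rlcp$) is justified by the standard identity
\[
LCP(a, c) = \min\bigl(LCP(a, b),\, LCP(b, c)\bigr) \quad \text{for } a \le b \le c \text{ in lex order},
\]
applied to the triples $(St_{Le}, St_M, St_{Ri})$ and $(St_{Le}, s^j, St_{Ri})$. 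If $LCP(St_{Le}, St_M) > Llcp$, then $St_M$ coincides with $St_{Le}$ at position $Llcp+1$ and therefore lies on the same side of $s^j$ as $St_{Le}$; if equal, the single call $\textsc{QLCP}(St_M, s^j, Llcp+1)$ decides the side; if smaller, then $s^j$ first disagrees with $St_M$ at position $LCP(St_{Le}, St_M) + 1$, so the new $Rlcp$ is exactly $LCP(St_{Le}, St_M)$. Step 8 is symmetric.

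For complexity, the key structural fact is that $\max(Llcp, Rlcp)$ is monotonically non-decreasing across iterations. In the nontrivial case of Step 7 (subcase $LCP(St_{Le}, St_M) < Llcp$), the identity above gives $LCP(St_{Le}, St_M) \ge \min(Llcp, Rlcp) = Rlcp$, so the new $Rlcp$ is at least as large as the old one; the other updates replace $Llcp$ or $Rlcp$ with $Mlcp \ge \max(Llcp, Rlcp)$ by construction of the starting index of the $\textsc{QLCP}$ call. By Corollary~\ref{cr:lcp}, each internal $\textsc{QLCP}$ invocation then costs $O(\sqrt{\Delta_i + 1})$ queries, where $\Delta_i$ is the increment to $\max(Llcp, Rlcp)$. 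Since $\max(Llcp, Rlcp) \le |s^j|$, we have $\sum_i \Delta_i \le |s^j|$, and with $T = O(\log n)$ iterations Cauchy--Schwarz gives
\[
\sum_{i=1}^{T} \sqrt{\Delta_i + 1} \le \sqrt{T \textstyle\sum_i \Delta_i} + T = O\bigl(\sqrt{|s^j|\log n} + \log n\bigr).
\]
Adding the two initial $\textsc{QLCP}$ calls of cost $O(\sqrt{|s^j|})$ each and the constant-cost $\textsc{LCPSuf}$ queries yields the stated query bound; multiplying each $\textsc{QLCP}$ cost by the $\log |s^j|$ factor from Corollary~\ref{cr:lcp} gives the stated time bound.

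The main obstacle is the error probability. A naive union bound over the $O(\log n)$ internal $\textsc{QLCP}$ calls, each carrying error at most $0.1$, is useless. The standard remedy is to amplify each call to success probability $1 - \Theta(1/\log n)$ by $O(\log \log n)$ independent repetitions with majority voting, so that a union bound over the $O(\log n)$ invocations yields overall error at most $0.1$. The resulting $\log \log n$ factor is either absorbed into the logarithmic terms already present or hidden by the $O^*$ convention used in the paper's headline bounds; the rest of the algorithm is deterministic given correct $\textsc{QLCP}$ answers.
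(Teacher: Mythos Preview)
Your correctness and complexity arguments are essentially the paper's: the paper also telescopes the $\textsc{QLCP}$ costs via Cauchy--Schwarz, though it tracks $Llcp$ and $Rlcp$ as two separately non-decreasing sequences rather than their maximum. Either bookkeeping works, and your invariant analysis is in fact more explicit than what the paper writes out.

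The gap is exactly where you flagged it. Your fix for the error probability---boosting each $\textsc{QLCP}$ call by $O(\log\log n)$ repetitions---does \emph{not} meet the lemma as stated: the extra $\log\log n$ factor multiplies the $\sqrt{|s^j|\log n}$ term, and that term is not dominated by the $+\log n$ addend, nor is the lemma phrased with $O^*$. The paper avoids this overhead by a different mechanism: it observes that the entire procedure is a chain of First-One-Search invocations in which each call's search range begins where the previous one ended, and then invokes a result of Kothari~\cite{k2014} stating that such a cascade can be executed with the same total query cost as the sum of the individual calls while keeping the \emph{overall} error at $0.1$, with no per-call amplification. Without appealing to that (or an equivalent) composition lemma, you do not recover the exact bounds claimed.
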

\begin{proof}
Let us consider $Llcp$. It is always increased and never decreased. Assume that during the algorithm, it has values $Llcp_1,\dots, Llcp_d$. We can say that $Llcp_1<\dots< Llcp_d$, and $d\leq \log_2 n$ because Binary search do at most $\log_2 n$ steps. Additionally, we can say that $Llcp_d\leq |s^j|$. The function $\textsc{QLCP}$ increases $Llcp$. Query complexity of changing $Llcp$ from $Llcp_i$ to $Llcp_{i+1}$ is $O(\sqrt{Llcp_{i+1}-Llcp_{i}})$ due to Corollary \ref{cr:lcp}. The time complexity is \[O(\sqrt{Llcp_{i+1}-Llcp_{i}}\log(Llcp_{i+1}-Llcp_{i}))=O(\sqrt{Llcp_{i+1}-Llcp_{i}}\log(|s^j|)).\]
So, the total query complexity is \[\sqrt{Llcp_{1}} + \sum\limits_{i=1}^{d-1}O(\sqrt{Llcp_{i+1}-Llcp_{i}})=O\left(\sqrt{d(Llcp_{1} + \sum\limits_{i=1}^{d-1}(Llcp_{i+1}-Llcp_{i}))}\right)\] due to Cauchy--Bunyakovsky--Schwarz inequality. At the same time, $Llcp_{1} + \sum\limits_{i=1}^{d-1}(Llcp_{i+1}-Llcp_{i})=Llcp_{d}\leq |s^i|$. Finally, the complexity of all changing of $Llcp$ is $O(\sqrt{d|s^j|})=O(\sqrt{|s^j|\log n})$. Using the same technique, we can show that the time complexity is $O(\sqrt{|s^j|\log n}\log|s^j|)$
Similarly, we can see that the complexity of all changing of $Rlcp$ is $O(\sqrt{d|s^j|})=O(\sqrt{|s^j|\log n})$. All other parts of the step of Binary search work with $O(1)$ query and time complexity, including $\textsc{LCPSuf}$ due to Lemma~\ref{lm:lcp2-arr}. They give us additional $O(\log n)$ complexity.

Note that each invocation of $\textsc{QLCP}$ has an error probability $0.1$. Therefore, $O(\log n)$ invocations of the procedure have an error probability that is close to $1$. At the same time, the whole Algorithm \ref{alg:leftSearch} is a sequence of First-One Search procedures such that next invocation of the First-One Search uses results of the previous invocations. Due to \cite{k2014}, such sequence can be converted to an algorithm with the same total complexity and error probability $0.1$.    
\Endproof
\end{proof}

The whole algorithm is presented in Algorithm \ref{alg:search}, and complexity is discussed in Theorem \ref{th:allSearch}

\begin{algorithm}[H]
    \caption{Quantum algorithm for the Multiple String Matching Problem.}\label{alg:search}
    \begin{algorithmic}
    
		\State $\textsc{ConstructSuffixArray}(t)$
		\State $\textsc{PreprocessingForLCP}(t)$
        \For{$j\in\{1,\dots,m\}$}
          \State $left_j\gets \textsc{LeftBorderSearch}(j)$
           \State $right_j\gets \textsc{RightBorderSearch}(j)$
           \If{$left_j=-1\Or right_j=-1$}
           \State $I_j\gets ()$
           \Else
           \State $I_j\gets (suf_{left_j},\dots,suf_{right_j})$
           \EndIf
        \EndFor
        \State \Return $(I_1,\dots,I_m)$
    \end{algorithmic}
\end{algorithm}
% \section{Quantum Version of LZ78} \label{sec:lz78}
\begin{theorem}\label{th:allSearch}
Algorithm \ref{alg:search} solves the Multiple String Matching Problem and works with $O(n+\sqrt{mL\log n}+m\log n)$ query complexity, $O(n+\sqrt{mL\log n}\log b+m\log n)$ time complexity and at most $0.1$ error probability. 
\end{theorem}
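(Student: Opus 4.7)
The plan is to verify correctness and then aggregate the cost of each subroutine listed in Algorithm~\ref{alg:search}, with the key inequality being Cauchy--Bunyakovsky--Schwarz applied across the $m$ dictionary words.

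First I would argue correctness. If $s^j$ occurs in $t$ starting at some position $p$, then $s^j$ is a prefix of the suffix $t[p:n]$. Because the suffix array sorts all suffixes lexicographically, the set of suffixes having $s^j$ as a prefix forms a contiguous block in $suf$; the two border searches return the endpoints $left_j$ and $right_j$ of this block (or signal emptiness), so $I_j=(suf_{left_j},\dots,suf_{right_j})$ enumerates exactly the occurrences of $s^j$ in $t$.

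Next I would sum the costs. Lemmas~\ref{lm:suf-arr}, \ref{lm:lcp-arr}, and \ref{lm:lcp2-arr} give $O(n)$ for the preprocessing step; Lemma~\ref{lm:leftSearch} (and its analogue for the right border) gives $O(\sqrt{|s^j|\log n}+\log n)$ query cost and $O(\sqrt{|s^j|\log n}\log|s^j|+\log n)$ time cost per border search. Summing over the $2m$ border searches yields a total of the form
\[
O\!\left(n+\sum_{j=1}^{m}\sqrt{|s^j|\log n}+m\log n\right).
\]
The crucial step is to bound $\sum_{j=1}^m\sqrt{|s^j|}\leq\sqrt{mL}$ by Cauchy--Bunyakovsky--Schwarz, which converts the middle term into $\sqrt{mL\log n}$. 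For the time bound I would apply the same reasoning and use $\log|s^j|\leq\log b$ to pull the extra logarithm out of the sum, producing $\sqrt{mL\log n}\log b$.

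The main obstacle I anticipate is the error analysis. Each border search errs with probability at most $0.1$ and the algorithm invokes $2m$ of them, so a plain union bound is useless. I would address this with the same device used in the proof of Lemma~\ref{lm:leftSearch}: the whole computation can be arranged as one chained sequence of First-One Search subroutines whose later calls depend on earlier ones, and by the construction from \cite{k2014} such a chained sequence can be realized by a single quantum algorithm with the same asymptotic complexity and overall error probability $0.1$. A more pedestrian alternative would be to amplify each border search to per-call error $O(1/m)$ at a multiplicative cost of $O(\log m)$ repetitions, which is absorbed into the $\log n$ factors whenever $\log m=O(\log n)$.
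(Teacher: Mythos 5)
Your proposal follows the paper's proof essentially step for step: $O(n)$ preprocessing via Lemmas~\ref{lm:suf-arr}, \ref{lm:lcp-arr}, \ref{lm:lcp2-arr}, the per-word border-search costs from Lemma~\ref{lm:leftSearch}, Cauchy--Bunyakovsky--Schwarz to obtain $\sqrt{mL\log n}$, the bound $\log|s^j|\leq\log b$ for the time complexity, and the composition of chained First-One Search subroutines from \cite{k2014} to keep the overall error at $0.1$. One minor caveat: your fallback amplification route would multiply the $\sqrt{mL\log n}$ term by $\log m$ and thus yields the stated bound only up to an extra log factor, so the \cite{k2014} composition you list first is indeed the argument needed for the theorem as stated.
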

\begin{proof}
Firstly, complexity of  $\textsc{ConstructSuffixArray}(t)$ and $\textsc{PreprocessingForLCP}(t)$ is $O(n)$ due to Lemmas \ref{lm:suf-arr}, \ref{lm:lcp-arr}, and \ref{lm:lcp2-arr}.

Query complexity of $\textsc{LeftBorderSearch}(j)$ and $\textsc{RightBorderSearch}(j)$ is $O(\sqrt{|s^j|\log n}+\log n)$, and time complexity is $O(\sqrt{|s^j|\log n}\log |s^j|+\log n)$.
So, the total query complexity of these procedures for all $j\in\{1,\dots,m\}$ is
\[O\left(m\log n +\sum_{j=1}^m\sqrt{|s^j|\log n}\right)=O\left(m\log n +\sqrt{\log n}\sum_{j=1}^m\sqrt{|s^j|}\right)=\]\[
O\left(m\log n +\sqrt{\log n\cdot m\sum_{j=1}^m|s^j|}\right)=O\left(m\log n +\sqrt{mL\log n}\right)\]
We get such a result because of Cauchy--Bunyakovsky--Schwarz inequality. Similarly, time complexity is $O\left(m\log n +\sqrt{mL\log n}\log b\right)$ since $b=max\{|s^1|,\dots,|s^m|\}$.

The total query complexity is $O(n+\sqrt{mL\log n}+m\log n)$, and time complexity is $O(n+\sqrt{mL\log n}\log d+m\log b)$.

Note that, each invocation of $\textsc{LeftBorderSearch}(j)$ and $\textsc{RightBorderSearch}(j)$ has error probability $0.1$. Therefore, $O(m)$ invocations of the procedure have an error probability that is close to $1$. At the same time, the algorithm is a sequence of First-One Search procedures. Due to \cite{k2014}, such sequence can be converted to an algorithm with the same total complexity and error probability $0.1$.    
\Endproof
\end{proof}

Let us present the lower bound for the problem.
\begin{theorem}\label{th:allSearchlb}
The lower bound for query complexity of the Multiple String Matching Problem is $\Omega(n+L)$ in the classical case, and $\Omega(n+\sqrt{mL})$ in the quantum case. 
\end{theorem}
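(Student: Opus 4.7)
The plan is to establish the four additive pieces $\Omega(n),\Omega(L)$ (classical) and $\Omega(n),\Omega(\sqrt{mL})$ (quantum) by separate reductions, and then combine by taking maxima. For the classical $\Omega(n+L)$ bound I would take $m=1$ and isolate each summand by a distinguishing task. For the $\Omega(L)$ piece, I would fix the text to $t=0^n$ with $n\geq L$ and consider two candidate patterns of length $L$: the all-zero string, which matches at position~$1$, versus any pattern obtained by flipping a single bit of $0^L$, which never matches. Any classical algorithm must read $\Omega(L)$ of the pattern bits to distinguish these. For the $\Omega(n)$ piece, I would set the only pattern to the single symbol $\mathtt{0}$; then $I_1$ records exactly the positions of $\mathtt{0}$ in $t$, so every position of $t$ must be inspected.

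The quantum $\Omega(n)$ lower bound will follow from the same one-symbol pattern, combined with the fact that knowing all $n$ bits of $t$ lets one compute the parity of $t$, and parity on $n$ bits has quantum query complexity $\Omega(n)$ by the standard Beals--Buhrman--Cleve--Mosca--de Wolf polynomial argument. For the $\Omega(\sqrt{mL})$ bound I would reduce from $m$ independent instances of the quantum OR/search problem on $k=L/m$ bits each: fix $t=0^n$ and let the $m$ patterns each have length $k$, with their bits viewed as $m$ disjoint, independent query inputs. Pattern $s^j$ yields a non-empty $I_j$ exactly when $s^j=0^k$, so the matching output computes a NOR on each of the $m$ blocks. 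Applying a strong direct product theorem for quantum search (in the style of Klauck--Spalek--de Wolf) then yields a lower bound of $\Omega(m\sqrt{k})=\Omega(\sqrt{mL})$.

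The hardest step will be invoking the direct product machinery correctly. A naive reduction from a single $L$-bit OR would only give $\Omega(\sqrt{L})$, falling short of the target by a factor of $\sqrt{m}$. The extra factor is available precisely because the algorithm must produce $m$ separate decisions whose inputs are disjoint and contribute independently, so no amortization across patterns is possible --- this is exactly what the strong direct product theorem captures. Once the four pieces are in hand, the claimed bounds $\Omega(n+L)$ and $\Omega(n+\sqrt{mL})$ follow by taking maxima of the corresponding pairs.
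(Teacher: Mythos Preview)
Your proposal is correct and follows the same high-level plan as the paper: isolate the $\Omega(n)$ piece by taking a single-character pattern so that the output determines all of $t$, and isolate the dictionary piece by fixing $t=0^n$ so that $I_j$ is nonempty iff $s^j=0^{|s^j|}$. The differences are in the technical tools. For the quantum $\Omega(n)$ the paper phrases the task as finding all marked positions among $n$ bits and invokes the $\Omega(\sqrt{n\cdot t})$ multi-target bound (with $t=\Theta(n)$ in the worst case), whereas you route through parity via BBCMW; both are valid, and your argument is slightly more direct. For the quantum $\Omega(\sqrt{mL})$ the paper casts the problem as unstructured search for up to $m$ ones among $L$ bits and quotes the $\Omega(\sqrt{mL})$ multi-search bound, while you partition the $L$ bits into $m$ equal blocks and apply a strong direct product theorem for OR. Your formulation makes the disjointness of the $m$ decisions explicit and justifies the extra $\sqrt{m}$ factor more transparently; the paper's version is terser but relies on the reader recognizing the $k$-item search bound.

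One small gap to close: your classical $\Omega(L)$ and quantum $\Omega(n)$ reductions fix $m=1$, so as stated they only yield the bound at that parameter setting. To match the theorem (arbitrary $n,m,L$), pad with $m-1$ additional one-symbol patterns; this changes $L$ by $O(m)\leq O(L)$ and leaves the asymptotics intact. The paper is comparably informal on this point.
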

\begin{proof}
Let us consider a binary alphabet for simplicity. Let $s^1=``1''$. Then, the complexity of searching positions of $s^1$ in $t$ is at least as hard as an unstructured search of all positions of $1$ among $n$ bits.  Due to \cite{bbbv1997}, classical query complexity of this problem is $\Omega(n)$ and quantum query complexity is $\Omega(\sqrt{nt})$, where $t$ is the number of occurrences $1$ among $n$ bits. In the worst case, it is also $\Omega(n)$.

Let us consider $t$ as a string of $0$s. In that case, if we have any symbol $1$ in a dictionary string, then this string does not occur in $t$. We should find all such strings for forming results for these strings. Number of such strings is at most $m$. The size of the search space is $L$. Then, the complexity of searching the answer is at least as hard as an unstructured search of at least $m$ positions of $1$ among $L$ bits. Due to \cite{bbbv1997}, classical query complexity of this problem is $\Omega(n)$ and quantum query complexity is $\Omega(\sqrt{mL})$

So, the total lower bound is $\Omega(max(n,L))=\Omega(n+L)$ in the classical case and $\Omega(max(\sqrt{n},\sqrt{mL}))=\Omega(\sqrt{n}+\sqrt{mL})$ in the quantum case.
\Endproof
\end{proof}
\section{Conclusion}\label{sec:concl}

In the paper, we present a quantum algorithm for the Multiple String Matching Problem that works with $O(n+\sqrt{mL}+m\log n)$ query complexity and error probability $0.1$. It is better than the classical counterparts if $O(m)$ strings of the dictionary have at least $\omega(\log n)$ length. In that case, $\sqrt{mL\log n}=o(L)$ and $m\log n=o(L)$. Similar situation for time complexity. We obtain speed up if $m\log n(\log b)^2=o(L)$, or at lest $O(m)$ strings have length $\omega(\log n\log\log n)$.

In the quantum case the lower bound is $\Omega(n+\sqrt{mL})$ and upper bound is  $O^*(n+\sqrt{mL})$. They are equal up to a log factor. The open question is to develop a quantum algorithm with complexity that is equal to the lower bound.

%\section*{Acknowledgments}

%We sincerely thank Aliya Khadieva for her help.
%, Nail Nurmeev, and Ilnar Zinnatullin for useful discussions. 

%todo add funds in the final version
%todo add KFTI in the final version

\bibliographystyle{splncs04}
\bibliography{tcs}
\end{document}